\newtheorem{theorem}{Theorem}
\newtheorem{lemma}[theorem]{Lemma}
\theoremstyle{definition}
\newcommand{\bel}{\begin{equation} \label}
\newcommand{\ee}{\end{equation}}
\newcommand{\rd}{{\mathbb R}^{2}}
\newcommand{\re}{{\mathbb R}}
\newcommand{\R}{{\mathbb R}}
\newcommand{\N}{{\mathbb N}}
\def\beq{\begin{equation}}
\def\eeq{\end{equation}}
\newcommand{\bea}{\begin{eqnarray}}
\newcommand{\eea}{\end{eqnarray}}
\newcommand{\beas}{\begin{eqnarray*}}
\newcommand{\eeas}{\end{eqnarray*}}
\begin{document}
  \begin{center} {\LARGE \bf The Fate of the Landau Levels under
      Perturbations of Constant Sign}

    \medskip

    \today
  \end{center}

  \medskip

  \begin{center}
    {\sc Fr{\'e}d{\'e}ric Klopp, Georgi  Raikov}\\
  \end{center}

\begin{abstract}
  We show that the Landau levels cease to be eigenvalues if we perturb
  the 2D Schr{\"o}dinger operator with constant magnetic field, by bounded
  electric potentials of fixed sign. We also show that, if the
  perturbation is not of fixed sign, then any Landau level may be an
  eigenvalue  of the
  perturbed problem.\\
\end{abstract}

{\bf AMS 2000 Mathematics Subject Classification:} 35J10, 81Q10,
35P20\\

{\bf Keywords:}
Landau Hamiltonians,  splitting of Landau levels \\

\section{Introduction. Main results}
\setcounter{equation}{0}
 In this note we consider the
Landau Hamiltonian $H_0$, i.e. the 2D Schr{\"o}dinger operator
with constant magnetic field. It is well-known that the spectrum
of $H_0$ consists of an arithmetic progression of eigenvalues
called {\em Landau levels} of infinite multiplicity. In Theorem
\ref{th1} we show that under  perturbations by fairly general
electric potentials of constant sign, the Landau levels cease to
be eigenvalues of the perturbed operator. Moreover, in Theorem
\ref{th2} we show that for each fixed Landau level there exist
non-constant-sign electric potentials such that the Landau levels
is still an eigenvalue of  infinite multiplicity of the perturbed operator.\\
Let
$$
H_0 : = \left(-i\frac{\partial}{\partial x} +\frac{by}{2}\right)^2
+ \left(-i\frac{\partial}{\partial y} - \frac{bx}{2}\right)^2 - b
$$
be the Landau Hamiltonian shifted by the value $b>0$ of the
constant magnetic field. The operator $H_0$ is self-adjoint in
$L^2(\rd)$, and essentially self-adjoint on $C^{\infty}_0(\rd)$.
 Note that
$C_0^{\infty}(\rd)\setminus\{0\}$ is a form core for the operator
$H_0$. It is well-known (see \cite{f, l, ahs}) that the spectrum
$\sigma(H_0)$ of the operator $H_0$ consists of the so-called
Landau levels $2bq$, $q \in {\mathbb N} : = \{0,1,2\ldots\}$,
which are eigenvalues of $H_0$ of infinite multiplicity.\\
 Let $V
\in L^{\infty}(\rd;\re)$. We will suppose that
    \bel{1}
    c \chi({\bf x})
    \leq V({\bf x}), \quad {\bf x} = (x,y) \in \rd,
    \ee
    where $c>0$ is
a constant and $\chi$ is the characteristic function of a disk of
radius $r>0$ in $\rd$, and \bel{2} \|V\|_{L^{\infty}(\rd)} < 2b.
\ee Set $H_{\pm} : = H_0 \pm V$. The main result of the note is
the following
\begin{theorem} \label{th1} Fix $q \in {\mathbb N}$. \\
  (i) Assume that $V \in L^{\infty}(\rd;\re)$ satisfies \eqref{1}; if
  $q \geq 1$, suppose in addition that \eqref{2} holds true. Then we
  have \bel{3} {\rm Ker}\,(H_+ - 2bq) = \{0\}.  \ee (ii) Assume that
  $V$ satisfies \eqref{1} and \eqref{2}. Then we have \bel{3a} {\rm
    Ker}\,(H_- - 2bq) = \{0\}.  \ee
\end{theorem}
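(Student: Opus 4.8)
The plan is to separate the favourable case from the genuinely obstructed ones, and in every case to reduce matters to a unique continuation statement for functions attached to a single Landau level. Throughout, write $P_q$ for the orthogonal projection of $L^2(\rd)$ onto $\mathrm{Ker}\,(H_0-2bq)$, and recall two structural facts. First, $\sigma(H_0-2bq)\subset 2b\mathbb{Z}$ with a spectral gap of width $2b$ on either side of $0$, so that $\|(H_0-2bq)^{-1}(I-P_q)\|\le 1/(2b)$. Second, every $f\in\mathrm{Ran}\,P_q$ is, after removal of the Gaussian factor $e^{-b|z|^2/4}$ (with $z=x+iy$), a polyanalytic function of order $q+1$, hence real-analytic, so $f$ cannot vanish on a set of positive measure without vanishing identically. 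Note also that \eqref{1} gives $V\ge c\chi\ge 0$. Since $H_\pm$ are magnetic Schr\"odinger operators with bounded coefficients, their eigenfunctions enjoy the strong unique continuation property; it therefore suffices, in each case, to force the candidate eigenfunction to vanish on the open disk $\{\chi=1\}$.

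For part (i) with $q=0$ no further hypothesis is needed. Here $H_0\ge 0$, so testing $(H_0+V)u=0$ against $u$ gives $\langle H_0u,u\rangle+\langle Vu,u\rangle=0$ with both terms nonnegative; hence each vanishes. From $\langle H_0u,u\rangle=0$ we obtain $u\in\mathrm{Ran}\,P_0$, and from $0=\langle Vu,u\rangle=\int V|u|^2\ge c\int_{\{\chi=1\}}|u|^2$ we obtain $u=0$ on the disk. As a lowest-Landau-level function $u$ is a Gaussian times an entire function, so $u\equiv 0$. This elementary two-positive-terms argument is exactly what breaks down once $q\ge 1$, because $H_0-2bq$ is then indefinite.

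For the remaining cases, part (i) with $q\ge 1$ and part (ii) for every $q$, the plan is a Feshbach--Schur reduction onto the $q$-th level, made possible by \eqref{2}. Writing $u=v+w$ with $v=P_qu$ and $w=(I-P_q)u$, and projecting $(H_0-2bq\pm V)u=0$ by $I-P_q$, one gets $[(H_0-2bq)\pm(I-P_q)V(I-P_q)]\,w=\mp(I-P_q)Vv$. Because $\|(H_0-2bq)^{-1}(I-P_q)\|\le 1/(2b)$ and $\|V\|_{L^\infty(\rd)}<2b$ by \eqref{2}, the bracketed operator is invertible on $\mathrm{Ran}\,(I-P_q)$ through a Neumann series, so $w$ is determined by $v$; projecting by $P_q$ then yields an effective equation $\mathcal{M}v=0$ on the level-$q$ space, with $\mathcal{M}=P_qVP_q \mp P_qV(I-P_q)[\,\cdots\,]^{-1}(I-P_q)VP_q$. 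The leading term $P_qVP_q$ is the Toeplitz operator with symbol $V$ on the Fock-type space $\mathrm{Ran}\,P_q$; by \eqref{1} it satisfies $\langle P_qVP_qf,f\rangle=\int V|f|^2\ge c\int_{\{\chi=1\}}|f|^2$, so the analyticity recalled above makes its kernel trivial.

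The main obstacle is precisely the correction term in $\mathcal{M}$: for $q\ge 1$ the resolvent $(H_0-2bq)^{-1}(I-P_q)$ is not sign-definite, being negative on the $q$ levels lying below $2bq$, so $\mathcal{M}$ is not manifestly nonnegative and one cannot read off triviality of its kernel from that of the Toeplitz term alone. This is exactly where \eqref{2} must be used quantitatively: the gap $2b$ strictly dominates $\|V\|_{L^\infty(\rd)}$, which keeps the coupling to the neighbouring levels subordinate and, together with the strict positivity furnished by \eqref{1} on the disk, should force any $v\in\mathrm{Ker}\,\mathcal{M}$ to vanish on $\{\chi=1\}$. Since $v$ is a genuine level-$q$ function, its analyticity then gives $v=0$, whence $w=0$ and $u=0$; equivalently, once vanishing on the disk is established one may appeal directly to unique continuation for $u$. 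I expect the technical heart of the proof to be this last step, namely converting the sign condition \eqref{1} and the gap condition \eqref{2} into vanishing of the reduced eigenfunction on the disk, rather than the formal reduction itself.
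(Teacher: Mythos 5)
Your treatment of part (i) for $q=0$ is complete and is exactly the paper's argument: both terms in $\langle H_0u,u\rangle+\langle Vu,u\rangle=0$ are nonnegative, hence vanish, so $u$ lies in the lowest Landau level and is killed by the positivity of $V$ on the disk together with the triviality of the kernel of the Toeplitz operator $\Pi_0\chi\Pi_0$ (your substitute for the paper's Lemma~\ref{l1} --- real-analyticity of Landau-level functions, so that vanishing on a set of positive measure forces vanishing identically --- is a legitimate alternative to the paper's explicit eigenvalue computation). However, for part (i) with $q\ge1$ and for all of part (ii) the proposal has a genuine gap, and you acknowledge it yourself: after the Feshbach--Schur reduction with respect to the \emph{symmetric} splitting $P_q$ versus $I-P_q$, the correction term $\mp P_qV(I-P_q)[\cdots]^{-1}(I-P_q)VP_q$ has no definite sign, because the reduced resolvent on $\mathrm{Ran}(I-P_q)$ is positive on the levels above $2bq$ and negative on those below. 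The assertion that \eqref{1} and \eqref{2} ``should force'' the reduced eigenfunction to vanish on the disk is precisely the step you do not supply, and with this splitting there is no evident way to supply it; the appeal to strong unique continuation for eigenfunctions of $H_\pm$ does not help, since the function you would need to control is $P_qu$, not $u$.

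The missing idea is to split \emph{asymmetrically}, adapted to the sign of the perturbation, so that the eliminated block is uniformly sign-definite. For $H_+=H_0+V$ put $P_q^+:=\sum_{j\ge q}P_j$, $P_q^-:=I-P_q^+$, and write $u=u_++u_-$ with $u_+=P_q^+u$. On $P_q^-L^2(\rd)$ one has $H_0-2bq\le-2b$, so by \eqref{2} the block $H_0+P_q^-VP_q^--2bq\le -2b+\|V\|_{L^\infty(\rd)}<0$ is boundedly invertible with a \emph{negative} inverse. Eliminating $u_-$, the Schur complement identity
$\langle(H_0-2bq)u_+,u_+\rangle+\langle P_q^+VP_q^+u_+,u_+\rangle-\langle P_q^+VP_q^-(H_0+P_q^-VP_q^--2bq)^{-1}P_q^-VP_q^+u_+,u_+\rangle=0$
is now a sum of three \emph{nonnegative} terms: the first because $u_+$ is supported on levels $\ge q$, the second because $V\ge0$, the third because the inverse is negative. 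All three therefore vanish; the first gives $u_+=P_qu_+$, the second then reads $\langle P_qVP_qu_+,u_+\rangle=0$, and your analyticity argument (or Lemma~\ref{l1}) yields $u_+=0$, hence $u_-=0$ and $u=0$. For $H_-=H_0-V$ one splits instead at $q+1$: on $P_{q+1}^+L^2(\rd)$ the block $H_0-P_{q+1}^+VP_{q+1}^+-2bq\ge 2b-\|V\|_{L^\infty(\rd)}>0$ has a positive inverse, and the Schur complement on the lower block becomes a sum of three nonpositive terms. This choice of splitting is the entire content of the $q\ge1$ argument; once it is made, the proof closes exactly along the lines you sketched, and no unique continuation for $H_\pm$ is needed.
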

\noindent The proof of Theorem \ref{th1} can be found in Section
2.

To the authors' best knowledge the fate of the Landau levels under
perturbations of the described class had never been addressed in the
mathematical literature. However, the asymptotic distribution of the
discrete spectrum near the Landau levels of various perturbations of
the Landau Hamiltonian and its generalizations has been considered by
numerous authors (see \cite{r0, i, rw, mr, fp, pr, rt, rs, pe}); in
particular, it was shown in \cite{rw} that for any $V$ which satisfies
\eqref{1}, and is relatively compact with respect to $H_0$, and for
any Landau level there exists an infinite sequence of discrete
eigenvalues of $H_{\pm}$ which accumulates to this Landau level. Such
results are related to the problem treated here: indeed, the existence
of such an infinite sequence is a necessary condition that the Landau
level not be an infinitely
degenerate eigenvalue of $H_{\pm}$.\\

\par The fact that $V$ has a fixed sign plays a crucial role in
our result, as shows the following
\begin{theorem}
  \label{th2}
  Fix $q\geq0$. Then, there exists a bounded compactly supported
  non-constant-sign potential $V$ such that
  $\|V\|_{L^{\infty}(\rd)}<b$ and
  \begin{equation}
    \label{eq:4}
    \dim\,{\rm Ker}\,(H_0+V - 2bq) = \infty.
  \end{equation}
\end{theorem}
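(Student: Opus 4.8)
The plan is to build $V$ radial, $V=V(|\bx|)$, supported in a fixed disk $\{|\bx|\le R\}$, so that the whole problem decouples into one‑dimensional pieces. Since such a $V$ commutes with rotations, $H_0+V$ leaves each angular‑momentum subspace invariant, and in the sector of angular momentum $\ell\in{\mathbb Z}$ it restricts to a half‑line operator $h_\ell=h_\ell^{(0)}+V$ with purely discrete, simple spectrum. In the symmetric gauge the $q$‑th Landau level meets each sector $\ell\ge-q$ in exactly one state $\phi_{q,\ell}$, so that $2bq\in\sigma(h_\ell^{(0)})$ precisely for $\ell\ge-q$; in other words, at $V=0$ every such sector is ``resonant''. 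Because a radial $V$ keeps the sectors decoupled and each sector contributes at most one eigenfunction, one has $\dim{\rm Ker}\,(H_0+V-2bq)=\#\{\ell:\,2bq\in\sigma(h_\ell)\}$. Thus the theorem reduces to producing a nonzero, sign‑changing, compactly supported radial $V$ with $\|V\|_{L^{\infty}(\rd)}<b$ for which infinitely many sectors remain resonant — i.e.\ a nonzero perturbation of $V=0$ that destroys the resonance in all but keeps it in infinitely many sectors.

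Next I would make the resonance condition explicit by a Feshbach--Schur reduction to the Landau level. Let $P_q$ be the orthogonal projection onto the $q$‑th Landau level. Since the neighbouring levels sit at distance $2b$ and $\|V\|_{L^{\infty}(\rd)}<b<2b$, the block $(H_0-2bq)+(1-P_q)V(1-P_q)$ is boundedly invertible on ${\rm Ran}\,(1-P_q)$, and the Schur complement gives an effective operator ${\mathcal F}_q(V)$ on $P_qL^2(\rd)$ with $\dim{\rm Ker}\,(H_0+V-2bq)=\dim{\rm Ker}\,{\mathcal F}_q(V)$ (a purely algebraic identity, valid even though $2bq$ is embedded). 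For radial $V$ this ${\mathcal F}_q(V)$ is diagonal in $\{\phi_{q,\ell}\}_{\ell\ge-q}$, with entries $d_\ell(V)=\langle\phi_{q,\ell},V\phi_{q,\ell}\rangle+O(\|V\|^2)$, and $d_\ell(V)=0$ exactly when sector $\ell$ is resonant. After the substitution $t=b|\bx|^2/2$ the leading term becomes a moment $\int_0^{T}W(t)\,\rho_{q,\ell}(t)\,dt$ of the profile $W$ of $V$ against the nonnegative transformed density $\rho_{q,\ell}\ge0$. The key observation is that on the fixed interval $(0,T)$ the densities $\rho_{q,\ell}$, $\ell\in S$, fail to span a dense subspace of $L^2(0,T)$ once $S$ is chosen lacunary: for $q=0$ this is exactly the Müntz--Szász theorem applied to the monomials $t^{\ell}$, and in general it follows from the fact that for large $\ell$ the states $\phi_{q,\ell}$ concentrate near the outer edge $|\bx|=R$, so the restricted densities are nearly parallel there. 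Hence their orthogonal complement contains an infinite‑dimensional space of bounded profiles $W$ that annihilate every linearized resonance condition indexed by $S$; and any such $W$ must change sign, since it is orthogonal to the nonnegative $\rho_{q,\ell}$.

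The hardest step is to upgrade these infinitely many linearized conditions to exact ones $d_\ell(V)=0$, $\ell\in S$, without leaving the class of small, sign‑changing, compactly supported radial potentials — note that merely killing the linear part overshoots (for $q=0$ the quadratic correction is of definite sign, so $d_\ell$ then becomes strictly nonzero and the resonance is lost). I would set this up as a fixed‑point problem, writing $V=V_0+\widetilde V$ with $V_0$ in the Müntz space above and $\widetilde V$ a correction chosen to cancel the remainders $d_\ell(V)-\langle\phi_{q,\ell},V\phi_{q,\ell}\rangle$; solvability hinges on inverting the linear map $\widetilde V\mapsto(\langle\phi_{q,\ell},\widetilde V\phi_{q,\ell}\rangle)_{\ell\in S}$ on suitable weighted sequence spaces. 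The genuine obstacle is precisely the near‑parallelism forced by compact support: the very states needed to keep infinitely many sectors resonant are the large‑$\ell$ ones, whose densities pile up at $|\bx|=R$, making this linear map severely ill‑conditioned. Turning the exponential smallness of the overlaps for large $\ell$ into quantitative control that makes the iteration converge — while preserving the sign change and the bound $\|V\|_{L^{\infty}(\rd)}<b$ — is where the real work lies. A cleaner alternative worth trying is to bypass the perturbative scheme with an explicit solvable profile (say a radial step potential), reducing each sector's matching condition to the vanishing of a Kummer/Laguerre function and checking directly that it holds for infinitely many $\ell$.
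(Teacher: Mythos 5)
Your overall skeleton matches the paper's: you pass to radial potentials, decompose into angular momentum sectors where each sector carries exactly one simple eigenvalue near $2bq$, reduce the theorem to making infinitely many scalar resonance conditions hold simultaneously over a lacunary set of sectors, and correctly identify that the linearized conditions (vanishing of $\langle\phi_{q,\ell},V\phi_{q,\ell}\rangle$) do not suffice because of higher-order corrections. But the argument stops exactly at the step that constitutes the actual content of the theorem: you pose the exact conditions as a fixed-point problem, observe that the linear map $\widetilde V\mapsto(\langle\phi_{q,\ell},\widetilde V\phi_{q,\ell}\rangle)_{\ell\in S}$ is severely ill-conditioned because the large-$\ell$ densities concentrate at the outer rim of a fixed support, and then state that making the iteration converge ``is where the real work lies.'' That is a genuine gap, not a routine verification: for a potential supported in a fixed disk the diagonal overlaps decay exponentially in $\ell$, so the inverse of your linear map is unbounded on any naive sequence space, and you give no comparison between the quadratic remainders and these exponentially small diagonal entries. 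The alternative you mention (an explicit step potential whose Kummer/Laguerre matching condition holds for infinitely many $\ell$) is likewise only a suggestion.

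The paper's way around this is a concrete construction that you would need to supply. Instead of fixing the shape of $V$ first and correcting afterwards, one parametrizes the potential directly by a sequence $t\in\ell^\infty(\N^*)$ of coupling constants attached to nested annuli whose radii live on a doubly exponential scale ($x_j^{\pm}=e^{-\alpha_j/2},\,e^{-\beta_j/2}$ with $\alpha_j,\beta_j\sim 2^{-Nj^2}$), matched to the lacunary angular momenta $m_j=2^{Nj^2}-1$. After an explicit rescaling of the eigenvalue displacements $E_q(v_t;m_j)-2bq$, and after augmenting the map by the coordinates $t_{2j}+t_{2j-1}$ (whose role is to force the solution to be nonzero, hence sign-changing by your — and the paper's — Theorem 1), the Jacobian at $t=0$ becomes an explicit invertible block-diagonal matrix plus an error of norm $O(2^{-N})$ on $\ell^\infty(\N^*)$; the analytic inverse function theorem in $\ell^\infty$ then solves all the exact resonance conditions simultaneously. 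It is precisely this choice of scales — making the diagonal entries of order one rather than exponentially small, and the off-diagonal couplings uniformly small — that is missing from your write-up: everything before it is a correct reduction, and everything after it is asserted rather than proved.
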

\noindent The proof of Theorem~\ref{th2} is contained in Section
3. Its strategy is to consider radially symmetric potentials $V$,
and, applying a decomposition into a Fourier series with respect
to the angular variable, to represent the operator $H_0 + V$ as an
infinite sum of ordinary differential operators involving only the
radial variable. Such a representation of $H_0 + V$ is well known,
and has been used in different contexts of the spectral theory of
the perturbed Landau Hamiltonian (see e.g. \cite{ahs,ms}). To
prove Theorem~\ref{th2}, the basic consequence  is that, for a
compactly supported, radially symmetric potential $V$, the first
derivative with respect to the coupling constant $\lambda$ at
$\lambda=0$ of the eigenvalues of $H_0+\lambda V$ close to the
$q$-th Landau level is determined by $V$ near the external rim of
its support. Thus, writing $V=V_t$ as an infinite sum of
concentric potentials depending on different coupling constant
$t=(t_l)_{l\geq1}\in\ell^\infty(\N^*)$, one can construct an
analytic mapping from a neighborhood of $0$ in
$\in\ell^\infty(\N^*)$ to a subset of the eigenvalues of $H_0+V_t$
near the $q$-th Landau level, the Jacobian of which we control for
$t=0$.

The potential exhibited in Theorem~\ref{th2} can be chosen arbitrarily
small. Following the same idea, one can also construct compactly
supported potentials such that any of the Landau levels be of finite
non trivial multiplicity or non compactly supported, bounded
potentials such that~\eqref{eq:4} be satisfied for any $q\in\N$.

\section{Proof of Theorem \ref{th1}} \setcounter{equation}{0}
Denote by $\Pi_q$, $q \in {\mathbb N}$, the orthogonal projection
onto ${\rm Ker}(H_0 - 2bq)$. Set
$$
\Pi_q^+ : = \sum_{j=q}^{\infty} \Pi_j, \quad \Pi_q^- : = I -
\Pi_q^+, \quad q \in {\mathbb N}.
$$
In order to prove Theorem \ref{th1}, we need a technical result
concerning some Toeplitz-type operators of the form $\Pi_q V
\Pi_q$.
\begin{lemma} \label{l1} Let $V \in L^{\infty}(\rd;\re)$ satisfy
  \eqref{1}. Fix $q \in {\mathbb N}$. Then
  \bel{6} \langle \Pi_q V \Pi_q u, u \rangle =
  0, \quad u \in L^2(\rd), \ee
where $\langle
  \cdot , \cdot \rangle$ denotes the scalar product in $L^2(\rd)$,
    implies \bel{7} \Pi_q u = 0.  \ee
\end{lemma}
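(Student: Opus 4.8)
The plan is to analyze the structure of the Landau level eigenspace and exploit the constant-sign condition (1) on $V$.

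The plan is to reduce the bilinear condition to a sign-definite integral and then invoke the analytic rigidity of the Landau-level eigenfunctions. Since $\Pi_q$ is an orthogonal projection, I would first rewrite the left-hand side of \eqref{6} as
\[
\langle \Pi_q V \Pi_q u, u \rangle = \langle V\,\Pi_q u,\, \Pi_q u\rangle = \int_{\rd} V(\bx)\,|(\Pi_q u)(\bx)|^2\,d\bx .
\]
Setting $\varphi := \Pi_q u \in {\rm Ker}(H_0 - 2bq)$, the hypothesis \eqref{6} becomes precisely $\int_{\rd} V|\varphi|^2 = 0$, and the goal \eqref{7} is to show $\varphi = 0$.

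Next I would exploit the constant-sign assumption \eqref{1}. Because $c\chi \geq 0$, condition \eqref{1} forces $V \geq 0$ on all of $\rd$, and moreover $V \geq c > 0$ on the disk $D$ whose characteristic function is $\chi$. Hence the integrand $V|\varphi|^2$ is nonnegative, so the vanishing of its integral forces $V|\varphi|^2 = 0$ almost everywhere; in particular $|\varphi|^2 = 0$ a.e.\ on $D$, i.e.\ $\varphi$ vanishes a.e.\ on the disk $D$. This step uses only measurability and positivity of $V$ on $D$, so the mere $L^\infty$ regularity of $V$ is no obstruction.

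The main obstacle, and the only place where the special nature of the Landau levels enters, is to upgrade ``$\varphi$ vanishes on $D$'' to ``$\varphi$ vanishes identically.'' Here I would use that $\varphi$ is an eigenfunction of $H_0$, a second-order elliptic operator with polynomial (hence real-analytic) coefficients; by elliptic regularity with analytic coefficients, $\varphi$ is real-analytic on $\rd$. A real-analytic function on the connected set $\rd$ that vanishes on the nonempty open disk $D$ must vanish everywhere, so $\varphi \equiv 0$, which is \eqref{7}. (Since $\varphi$ is analytic, and therefore continuous, ``a.e.\ on $D$'' genuinely gives vanishing on all of $D$.) For $q=0$ this can be seen completely explicitly: every element of ${\rm Ker}\,H_0$ has the form $g(z)\,e^{-b|z|^2/4}$ with $z = x+iy$ and $g$ entire, so vanishing on $D$ forces $g \equiv 0$ by the identity theorem; the higher Landau levels admit an analogous holomorphic-times-Gaussian description and are treated in the same way. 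I expect this unique-continuation argument to be the crux, while the preceding reductions are routine.
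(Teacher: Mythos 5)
Your argument is correct, and it replaces the paper's key step by a genuinely different one. The first reduction is the same in substance as the paper's: from $V\geq c\chi\geq 0$ and $\langle V\Pi_q u,\Pi_q u\rangle=0$ one gets $\langle \chi\,\Pi_q u,\Pi_q u\rangle=0$, i.e.\ $\varphi:=\Pi_q u$ vanishes a.e.\ on the disk $D$. The divergence is in how one rules out a nonzero Landau-level eigenfunction vanishing on $D$. The paper phrases this as the statement ${\rm Ker}\,(\Pi_q\chi\Pi_q)=\{0\}$ in $\Pi_q L^2(\rd)$ and settles it by citing the explicit diagonalization of this Toeplitz operator from \cite[Eq.\ (3.32)]{rw}, where (after centering the disk, using the magnetic translations) the angular-momentum decomposition exhibits all eigenvalues of $\Pi_q\chi\Pi_q$ as strictly positive integrals. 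You instead invoke unique continuation: $\varphi$ solves $H_0\varphi=2bq\varphi$ with $H_0$ elliptic with real-analytic (polynomial) coefficients, hence $\varphi$ is real-analytic by analytic hypoellipticity and cannot vanish on a nonempty open set unless it vanishes identically; your explicit $g(z)e^{-b|z|^2/4}$ description for $q=0$ is a concrete instance of the same rigidity. Both routes are complete. Yours is self-contained (no external eigenvalue computation) and extends immediately to any nonempty open set in place of a disk, and — if one uses the holomorphic structure of the Landau levels — even to sets of positive measure; the paper's route is more elementary in the tools it uses (no analytic regularity theorem) and yields quantitative lower bounds on $\Pi_q\chi\Pi_q$ that are useful elsewhere. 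One small point of care, which you handle correctly: the a.e.\ vanishing on $D$ must be upgraded to genuine vanishing before applying the identity theorem, and continuity of $\varphi$ (from its analyticity) does exactly that.
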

\begin{proof}  By \eqref{1} and
  \eqref{6},
  \bel{4} 0 \leq c\langle \Pi_q \chi \Pi_q u,
  u\rangle \leq \langle \Pi_q V \Pi_q u, u\rangle = 0,
  \ee
  i. e.
  \bel{5} \langle \Pi_q \chi \Pi_q u, u\rangle = 0. \ee   Denote by $T
  : = \Pi_q \chi \Pi_q$ the operator self-adjoint in the Hilbert space
  $\Pi_q L^2(\rd)$.  The operator $T$ is positive and compact, and its
  eigenvalues can be calculated explicitly (see \cite[Eq.
  (3.32)]{rw}). This explicit calculation implies that ${\rm Ker}\,T =
  \{0\}$. Therefore,  \eqref{7} follows from \eqref{5}.
\end{proof}
\begin{proof}[Proof of Theorem \ref{th1}] First, we prove \eqref{3} in
  the case $q=0$.  Assume that there exists $ u \in D(H_+) = D(H_0)$
  such that $H_+u = 0$. Hence, \bel{15} \langle H_0 u, u \rangle +
  \langle Vu,u \rangle = 0.  \ee The two terms at the l.h.s. of
  \eqref{15} are non-negative, and therefore they both should be equal
  to zero. Since $\langle H_0 u, u\rangle = 0$, we have \bel{16} u =
  \Pi_0 u.  \ee Therefore, $\langle V u, u\rangle = \langle \Pi_0 V
  \Pi_0 u, u\rangle = 0$. By Lemma
  \ref{l1}, we have $\Pi_0 u = 0$, and by \eqref{16} we conclude that
  $u = 0$. \\
  Next, we prove \eqref{3} in the case $q \geq 1$.  Assume that there
  exists $ u \in D(H_0)$ such that \bel{9} H_+ u = 2bq u.  \ee Set
  $u_+ : = \Pi_q^+ u$, $u_- : = u - u_+$; evidently, $u_{\pm} \in
  D(H_0)$.  Since $H_0$ commutes with the projections $\Pi_q^{\pm}$,
  \eqref{9} implies \bel{10} H_0 u_+ - 2bq u_+ + \Pi_q^+ V \Pi_q^+ u_+
  + \Pi_q^+ V \Pi_q^- u_- = 0, \ee \bel{11} H_0 u_- - 2bq u_- +
  \Pi_q^- V \Pi_q^- u_- + \Pi_q^- V \Pi_q^+ u_+ = 0.  \ee Now note
  that the operator $H_0 + \Pi_q^- V \Pi_q^- - 2bq$ is boundedly
  invertible in $\Pi_q^- L^2(\rd)$, and its inverse is a negative
  operator.  Moreover, by \eqref{11} we have \bel{14} u_- = -
  \left(H_0 + \Pi_q^- V \Pi_q^- - 2bq\right)^{-1}\Pi_q^- V \Pi_q^+ u_+
  , \ee which inserted into \eqref{10} implies
  $$
  H_0 u_+ - 2bq u_+ + \Pi_q^+ V \Pi_q^+ u_+ - \Pi_q^+ V \Pi_q^-
  \left(H_0 + \Pi_q^- V \Pi_q^- - 2bq\right)^{-1}\Pi_q^- V \Pi_q^+ u_+
  = 0,
  $$
  and hence,
  $$
  \langle (H_0 - 2bq) u_+, u_+\rangle + \langle \Pi_q^+ V \Pi_q^+ u_+,
  u_+\rangle $$ \bel{12} - \langle \Pi_q^+ V \Pi_q^- \left(H_0 +
    \Pi_q^- V \Pi_q^- - 2bq\right)^{-1}\Pi_q^- V \Pi_q^+ u_+,
  u_+\rangle = 0.  \ee The three terms on the l.h.s. of \eqref{12} are
  non-negative, and hence they all should be equal to zero.  Since
  $u_+ = \Pi_q^+ u_+$, the equality $\langle (H_0 - 2bq) u_+,
  u_+\rangle = 0$ implies \bel{13} u_+ = \Pi_q u_+.  \ee Therefore,
  $\langle \Pi_q^+ V \Pi_q^+ u_+, u_+\rangle = \langle \Pi_q V \Pi_q
  u_+, u_+\rangle$, and $\langle \Pi_q^+ V \Pi_q^+ u_+, u_+\rangle = 0$
  is equivalent to $\langle \Pi_q V \Pi_q u_+, u_+\rangle = 0$.  Now
  by Lemma \ref{l1} we have $\Pi_q u_+ = 0$, by \eqref{13} we have
  $u_+ = 0$, and by \eqref{14} we have $u_- =
  0$. Therefore, $u=0$.\\
  Finally, we sketch the proof of \eqref{3a} which is quite similar to the one of \eqref{3}.
Let $w \in
  D(H_0)$, $H_- w = 2bq w$.  Set $w_+ : =
  \Pi_{q+1}^+ w$, $w_- : = w - w_+$.
   The operator $H_0 -
  \Pi_{q+1}^+ V \Pi_{q+1}^+ - 2bq$ is boundedly invertible in
  $\Pi_{q+1}^+ L^2(\rd)$, its inverse is a positive operator, and
  by analogy with \eqref{14} we get
     $
     w_+ =
\left(H_0 - \Pi_{q+1}^+ V \Pi_{q+1}^+ - 2bq\right)^{-1}\Pi_{q+1}^+
V
  \Pi_{q+1}^- w_-$.
  Further, similarly to \eqref{12}, we find
  that
  $$
  \langle (H_0 - 2bq) w_-, w_-\rangle - \langle \Pi_{q+1}^+ V
  \Pi_{q+1}^- w_-, w_-\rangle $$ $$ - \langle \Pi_{q+1}^- V
  \Pi_{q+1}^+ \left(H_0 - \Pi_{q+1}^+ V \Pi_{q+1}^+ -
    2bq\right)^{-1}\Pi_{q+1}^+ V \Pi_{q+1}^- w_-, w_-\rangle = 0.
    $$
  The three terms on the l.h.s.  are non-positive, and
  hence they should vanish. As in the proof of \eqref{3}, we easily conclude
  that $w_- = 0$, and hence $w=0$.
\end{proof}

\section{Proof of Theorem \ref{th2}}
\setcounter{equation}{0} Define the operators
$$
H_0^{(m)} : = - \frac{1}{\varrho} \frac{d}{d \varrho} \varrho
\frac{d}{d \varrho} + \left(\frac{m}{\varrho} - b \varrho\right)^2
- b, \quad m \in {\mathbb Z},
$$
 self-adjoint in $L^2(\re_+; \varrho d\varrho)$, as the
Friedrichs' extensions of the operators defined on
$C_0^{\infty}(\re_+)$ with $\re_+ : = (0,\infty)$.  Then, the
operator $H_0$ is unitarily equivalent to the orthogonal sum
$\oplus_{m \in
  {\mathbb Z}} H_0^{(m)}$ under the passage to polar coordinates
$(\varrho,\phi)$ in $\rd$, and a subsequent decomposition into a
Fourier series with respect to the angular variable $\phi$. For
any $m \in {\mathbb Z}$, we have
$$
\sigma(H_0^{(m)}) = \bigcup_{q=m_-}^{\infty}\{2bq\}
$$
where, as usual, $m_- : = \max \{0,-m\}$ (see e.g. \cite{ahs}).
In contrast to the 2D Landau Hamiltonian $H_0$ however, we have $
{\rm
  dim \, Ker} (H_0^{(m)} - 2bq) = 1$ for all $q
\geq m_-$, $m \in {\mathbb Z}$. \\
Further, assume that $V \in L^{\infty}(\rd;\re)$ and $V$ is
radially symmetric i.e.
$$
V(x,y) = v\left(\sqrt{x^2 + y^2}\right), \quad (x,y) \in \rd.
$$
Then, the operator $H_0 + V$ is unitarily equivalent to the
orthogonal
sum $\oplus_{m \in {\mathbb Z}} (H_0^{(m)} +v)$.\\
Thus,
\bel{cu3} {\rm dim \, Ker} (H_0 + V - 2bq) = \sum_{m \in
  {\mathbb Z}} {\rm dim \, Ker} (H_0^{(m)} + v - 2bq), \quad q \in
{\mathbb N}. \ee
If $\|V\|_{L^{\infty}(\rd)}=\|v\|_{L^{\infty}(\R_+)} < b$, for all
$m \in \N$, the $q$-th eigenvalue of $H_0^{(m)} + v$ that we
denote by $E_q(v;m)$, stays in the interval $2bq+]-b,b[$; in
particular, it stays simple. So, as a consequence of regular
perturbation theory, see e.g.~\cite{Ka:80,Re-Si}, the eigenvalues
$(E_q(v;m))_{q\geq0}$ are real analytic functions of the potential
$v$. Moreover, one computes
\begin{equation}
  \label{eq:2}
  \frac{\partial }{\partial t}E_q(tv;m)|_{t=0}=
  \int_{\R_+}v(\rho)\varphi_{q,m}(\varrho)^2 \varrho d\varrho
\end{equation}
where
\begin{equation*}
\varphi_{q,m}(\varrho) : = \sqrt{\frac{q!}{\pi
    (q+m)!}\left(\frac{b}{2}\right)^{m+1}} \varrho^m
L_q^{(m)}\left(b\varrho^2/2\right) e^{-b \varrho^2/4}, \quad
\varrho \in \re_+, \quad q \in {\mathbb N},
\end{equation*}
are the normalized eigenfunctions of the operator $H_0^{(m)}$, $m
\in {\mathbb N}$, and
$$
L_q^{(m)}(s) : = \sum_{l=0}^{q} \frac{(q+m)!}{(m+l)! (q-l)!}
\frac{(-s)^l}{l!}, \quad s \in \re,
$$
are the generalized Laguerre polynomials.\\
Pick $t\in ]-b/2,b/2[^{\N^*}$ and consider the potential
\begin{equation}
  \label{eq:1}
  v_t(\rho)= -
  \sum_{j\in\N^*}t_{2j-1}\bold{1}_{[x^-_{2j-1},x^+_{2j-1}]}(\rho) +
  \sum_{j\in\N^*}t_{2j}\bold{1}_{[x^-_{2j},x^+_{2j}]}(\rho), \quad
  \rho \in \re_+,
\end{equation}
where $x^-_j : =e^{-\alpha_j/2}$, $x^+_j := e^{-\beta_j/2}$, and
\begin{equation}
  \label{eq:3}
  \alpha_{2j-1}:=2^{-N(j-1/2)^2+1},\ \beta_{2j-1}:=2^{-Nj^2+1},\
  \alpha_{2j}:=2^{-N (j-1/2)^2},\ \beta_{2j}:=2^{-N j^2}.
\end{equation}
We will choose the large integer $N$ later on.\\
As, for $j\geq1$, one has
\begin{equation*}
N(j-1)^2<N(j-1/2)^2-1<N(j-1/2)^2<Nj^2-1<Nj^2<N(j+1/2)^2-1,
\end{equation*}
we note that, for $N$ sufficiently large, one has:
\begin{itemize}
\item
  $\|v_t\|_{L^{\infty}(\R_+)}<b$ for $t\in ]-b/2,b/2[^{\N^*}$;
\item $v_t$ vanishes identically if and only if the vector
$(t_j)_j$
  vanishes identically.
\end{itemize}
For $j\geq1$, define $m_j=2^{N j^2}-1$ and consider the mapping
\begin{equation*}
  \mathcal{E}:\ t\in ]-b/2,b/2[^{\N^*}\mapsto
  (\mathcal{E}_{2j-1}(t),\mathcal{E}_{2j}(t))_{j \geq 1}=
  (t_{2j}+t_{2j-1},\tilde E_q(v_t;m_j))_{j \geq 1}\in]-r,r[^{\N^*}
\end{equation*}
where
\begin{equation*}
  \tilde E_q(v_t;m_j)=\frac{2\pi q!}{C_j}\frac{m_j (m_j!)^2}{(q+m_j)!}
  \left(\frac{2}{b}\right)^{m_j + 1}\left(E_q(v_t;m_j)-2bq\right)
\end{equation*}
The constants $(C_j)_j$ are going to be chosen later on.\\
The mapping is real analytic and we can compute its Jacobi matrix
at $t=0$. First, bearing in mind \eqref{eq:2}, \eqref{eq:1}, and
\eqref{eq:3}, we easily find that
\begin{gather*}
\begin{split}
  \partial_{t_{2j}}\mathcal{E}_{2l}(0)&=C_j^{-1}(e^{-m_l\beta_{2j}}(1+o(1))
  -e^{-m_l\alpha_{2j}}(1+o(1))) \\&
  =  \begin{cases}
      1\text{ if }j=l,\\
      O\left(e^{-2^{N|j-l|}}\right)\text{ if }l>j,\\
      O\left(2^{-N|j-l|}\right)\text{ if }l<j,
    \end{cases}
    \end{split}
  \\
  \begin{split}
    \partial_{t_{2j+1}}\mathcal{E}_{2l}(0)
    &=-C_j^{-1}(e^{-m_l\beta_{2j+1}}(1+o(1))
    -e^{-m_l\alpha_{2j+1}}(1+o(1)))\\
    &=    \begin{cases}
      -e^{-2} + O(e^{-2^{Nj}}) \text{ if }j=l,\\
      O\left(e^{-2^{N|j-l|}}\right)\text{ if }l>j,\\
      O\left(2^{-N|j-l|}\right)\text{ if }l<j,
    \end{cases}
  \end{split}
\end{gather*}
when one chooses $C_j$ properly. In this formula, $o(1)$ refers to
the behavior of the function when $N\to+\infty$ uniformly in
$l,j$. Moreover, obviously,
$$
\partial_{t_{2j-1}}\mathcal{E}_{2l -
1}(0) = \partial_{t_{2j}}\mathcal{E}_{2l-1}(0) = \delta_{jl}.
$$
\\
Hence,  the Jacobi matrix of the mapping $\mathcal{E}(t)$ at $t=0$
can be written as $J+E$ where $J$ is a block diagonal matrix made
of the blocks $\displaystyle \begin{pmatrix} 1 & 1 \\ -e^{-2} & 1
\end{pmatrix}$ and the error matrix $E$ is a bounded operator from
$l^\infty(\N^*)$ to itself with a norm bounded by $C 2^{-N}$. So
for $N$ large enough this Jacobi matrix is invertible and, using
the analytic inverse mapping theorem, we see that there exists a
real analytic diffeomorphism $\varphi$ on a ball of
$l^\infty(\N^*)$ centered at $0$, such that
\begin{equation*}
  \mathcal{E}\circ\varphi(u)=
  (u_{2j}+u_{2j-1},u_{2j}-e^{-2}u_{2j-1})_{j\geq 1}\in]-r,r[^{\N^*},
\end{equation*}
and $\varphi(0) = 0$.  To construct the potential $v_t$ having the
Landau level $2bq$ as an eigenvalue with infinite multiplicity, it
suffices to take $t=\varphi(u)$ with $u_{2j}= e^{-2}u_{2j-1} \neq 0$
for infinitely many indices $j \in \N^*$. This completes
the proof of Theorem~\ref{th2}.\\

{\bf Acknowledgements}. The authors were partially supported by
the Chilean Scientific Foundation {\em Fondecyt} under Grants
7080135 and 1050716.\\
G. Raikov acknowledges also the partial support of {\em N{\'u}cleo
Cient{\'\i}fico ICM} P07-027-F ``{\em Mathematical Theory of
Quantum and Classical Magnetic Systems"}.

{\sc Fr{\'e}d{\'e}ric Klopp}\\
D{\'e}partement de  math{\'e}matiques\quad\quad et\quad\quad Institut
Universitaire de France\\
Universit{\'e}  de  Paris Nord\\
Avenue  J.Baptiste  Cl{\'e}ment\\
93430 Villetaneuse, France\\
E-mail: klopp@math.univ-paris13.fr\\

{\sc G. Raikov}\\
Facultad de Matem{\'a}ticas\\
Pontificia Universidad Cat{\'o}lica de Chile\\
Av. Vicu{\~n}a Mackenna 4860\\ Santiago de Chile\\
E-mail: graikov@mat.puc.cl\\

\end{document}